\numberwithin{equation}{section}
\theoremstyle{plain}
\newtheorem{theorem}{Theorem}[section]
\theoremstyle{definition}
\theoremstyle{remark}
\numberwithin{equation}{section}
\numberwithin{table}{section}
\numberwithin{figure}{section}
\newcommand{\bea}{\begin{eqnarray*}}
\newcommand{\eea}{\end{eqnarray*}}
\newcommand{\bean}{\begin{eqnarray}}
\newcommand{\eean}{\end{eqnarray}}
\def\Gf#1,#2,#3,#4;{G^{#1,#2}_{#3,#4}}
\def\Hf#1,#2,#3,#4;{H^{#1,#2}_{#3,#4}}
\def\gfi#1,#2,#3,#4;{I^{(#1_i),(#2_i)}_{(#3_i),#4}}
\def\gfd#1,#2,#3,#4;{D^{(#1_i),(#2_i)}_{(#3_i),#4}}
\def\phi{\varphi}
\begin{document}
\title[Fractional supersymmetric quantum mechanics and lacunary
 Hermite polynomials]
{Fractional supersymmetric quantum mechanics  and lacunary Hermite polynomials}
\author{F. Bouzeffour $^\ast$,  M. Garayev $^\diamond$}
\address{$^\ast$ Department of mathematics, College of Sciences, King Saud University,  P. O Box 2455 Riyadh 11451, Saudi Arabia and  College of Sciences, University of Carthage, B. O. Box 64 Jarzouna 7021, Bizerte, Tunisia.}
\address{$^\diamond$ Department of mathematics, College of Sciences, King Saud University,  P. O Box 2455 Riyadh 11451, Saudi Arabia and  College of Sciences}
\email{fbouzaffour@ksu.edu.sa, mgarayev@ksu.edu.sa}
\maketitle%

\begin{abstract}We consider a realization of fractional supersymmetric of quantum mechanics of order $r$, where the Hamiltonian and supercharges involve reflection operators. It is shown that the Hamiltonian has $r$-fold degenerate spectrum and the eigenvalues of  hermitian supercharges are zeros of the associated Hermite polynomials of Askey and Wimp. Also it is shown that the associated eigenfunctions involve lacunary Hermite polynomials.
\end{abstract}

\maketitle

%
\section{Introduction}
Supersymmetry relates bosons and fermions on the basis of $\mathbb{Z}_2$-graded superalgebras \cite{Witten:1981nf, NicolaiXP}, where the fermionic set is realized by matrices of finite dimension or in terms of Grassmann variables .
In \cite{PlyushchayDC},  M. Plyushchay formulated a supersymmetric quantum mechanics (SUSYQM) for one-dimensional systems 
without introducing independent fermion ("spin-like") degrees of freedom but taking reflection operator $R$ as a grading operator, see also \cite{PlyushchayTY,CorreaJE,GamboaBZ}. In this realization, the  components of 
the supersymmetric Hamiltonian and supercharges are realized as differential
operators which additionally involve in their structure the reflection operator being a kind of Yang-Dunkl type operator \cite{PlyushchayTY,19,18}.
 In simplest supersymmetric systems of such a nature related to the quantum harmonic oscillator, the eigenstates are expressed in terms of Hermite orthogonal polynomials \cite{PostPE}.
\\ The fractional supersymmetric quantum mechanics of order $r$ is an extension of the ordinary supersymmetric quantum mechanics for
which the $\mathbb{Z}_2$-graded superalgebras are  replaced by  $\mathbb{Z}_r$-graded superalgebras
\cite{KhareGY,2,GamboaBZ,Daoud}. The framework of the fractional supersymmetric quantum mechanics has been shown to be quite fruitful. Among many works, we may quote the deformed Heisenberg algebra introduced in connection
with parafermionic and parabosonic systems \cite{PlyushchayTY,Daoud}, $C_\lambda$-extended oscillator algebra developed in the
framework of parasupersymmetric quantum mechanics \cite{QuesneCB}, and the generalized Weyl-Heisenberg
algebra $W_k$ related to $\mathbb{Z}_k-$graded supersymmetric quantum mechanics \cite{Daoud}. Note that the construction of fractional supersymmetric quantum mechanics without employment of fermions and parafermions degrees of freedom was started by Plyushchay \cite{PlyushchayDC, PlyushchayTY}. In particular, the idea of realization of fractional supersymmetry in the form as it was presented in refs. \cite{Daoud} and \cite{QuesneCB}, was proposed originally in \cite{GamboaBZ}, see also \cite{CorreaJE}.  \\

Due to the well-known fact that the  case $r=2$ corresponds to ordinary supersymmetry, it is interesting to study
which supersymmetric Hamiltonian involving reflection operators admits exact
 eigenfunctions which are expressible in terms of classical orthogonal polynomial system.
\\

In this paper, we give essentially an extension of the construction elaborated by Plyushchay in \cite{PlyushchayDC,PlyushchayTY} and develop a fractional supersymmetric quantum of order $r$ without parafermonic degrees of freedom. For this purpose, we use a generalization of Klein operator of the form $K^r=1,\,\,r\geq2,$ in order to study some mathematical properties of the fractional supersymmetric quantum mechanics. We also investigate such a possibility of realizing $r=2$ supersymmetry for the case of fractional supersymmetric qunatum mechanics.\\

The paper is organized as follows. In Section 2, we review the definition of the ordinary supersymmetric algebra  and provide some of their realizations. In section 3, we present a realization of the fractional supersymmetric quantum mechanics. In section 4, we construct a basis involving the lacunary Hermite polynomials that diagonalizes simultaneously  the Hamiltonian and the supercharge.  We show that the eigenvalues of the supercharge are the zeros of the associated Hermite polynomials.
\section{Ordinary supersymmetric quantum mechanics}
Let us briefly recall the formalism of the usual supersymmetric quantum mechanics. The SUSYQM, introduced by Witten \cite{Witten:1981nf},
is generated by three operators $Q_-,$ $Q_+$ and $H$ satisfying
\begin{equation}
Q_\pm^2=0,\quad[Q_\pm,H]=0,\quad \{Q_-,Q_+\}=H.\label{adam}
\end{equation}
Superalgebra (2.1) corresponds to a supersymmetric system with $N=2$ supersymmetry generated by
 two supercharges being  mutually conjugated nilpotent operators.
The usual construction of Witten's supersymmetric quantum mechanics with the superalgebra \eqref{adam}
is performed by introduction of fermion degrees of freedom (realized in a matrix form, or in terms of Grassmann variables) which commute with bosonic degrees of freedom \cite{ NicolaiXP}. Another realization of supersymmetric quantum mechanics is the so-called minimally  bosonized supersymmetric quantum. Its supercharges $Q_{i-}$ and $Q_{i+}$ have the expression (see, \cite{PlyushchayDC})
\begin{equation}\label{cha}
Q_{i-}=\frac{1}{\sqrt{2}}(\frac{d}{dx}+W(x))\Pi_i\quad \mbox{and}\quad Q_{i+}=\frac{1}{\sqrt{2}}\Pi_i(-\frac{d}{dx}+W(x)),\quad i=0,\,1,
\end{equation}
where $W(x)$ is an odd function $W(-x)=-W(x),$  $\Pi_0$ and $\Pi_1$ are the orthogonal projection operators defined by
\begin{equation}
\Pi_0=\frac{1}{2}(1+R),\quad \mbox{and}\quad \Pi_1=\frac{1}{2}(1-R).
\end{equation}
Here, $R$ is the reflection
operator
\begin{equation}
(Rf)(x)=f(-x).
\end{equation}
A straightforward computation shows that
\begin{equation}
Q_{i\pm}^2=0 ,\label{eq3}
\end{equation}
and the corresponding Hamiltonian $H_0$ and $H_1$ defined by \eqref{adam}, have the forms
\begin{align}
&H_0=-\frac{1}{2}\frac{d^2}{dx^2}+\frac{1}{2}W^2-\frac{1}{2}\,R\,W',\label{H11}\\&
H_1=-\frac{1}{2}\frac{d^2}{dx^2}+\frac{1}{2}W^2+\frac{1}{2}\,R\,W'.\label{H22}
\end{align}
In Eq. \eqref{cha}, two pairs of mutually conjugated operators correspond to two different $N=2$ supersymmeric systems with a bosonized supersymmetry. Each of such two systems can be obtained from the corresponding Witten's system \eqref{adam} with independent fermion degrees of freedom via a special non-local unitary transformation and subsequent reduction described in detail in ref. \cite{GamboaBZ,JakubskyKI}.  In addition, if the initial system \eqref{cha} is in the phase of unbroken supersymmetry, as it happens in the case of
  superoscillator, one of the corresponding resulting system will be described by unbroken bosonized $N=2$ supersymmetry,
  while another system will be described by the broken $N=2$ bosonized supersymmetry.  These two cases are described by Hamiltonian operators \eqref{HA1} and \eqref{HA2}, respectively.\\
In the particular case, when $W(x)=x$, the supercharges $Q_{i-}$ and $Q_{i+}$ are expressed as
\begin{align}\label{HH2}
Q_{i-}=a_-\Pi_i,\quad  Q_{i+}=\Pi_ia_+,
\end{align}
where  $a_+$ and $a_-$ are the standard creation and annihilation operators:
\begin{equation}\label{H4}
a_-=\frac{1}{\sqrt{2}}(\frac{d}{dx}+x),\quad a_+=\frac{1}{\sqrt{2}}(-\frac{d}{dx}+x).
\end{equation}
Recall that the eigenvalue problem corresponding to the harmonic oscillator is the differential equation for $\psi(x)$
\begin{equation}
-\frac{1}{2}\frac{d^2\psi(x)}{dx^2}+\frac{1}{2}\,x^2\psi(x)=\lambda \psi(x).\label{eq11}
\end{equation}
The wave functions  corresponding to the well-known eigenvalues
\begin{equation}
\lambda_n=n+\frac{1}{2},\quad n=0,\,1,\,2,\dots,\label{eq22}
\end{equation}
are given by
\begin{equation}
\psi_n(x)=(\sqrt{\pi} n!2^n)^{-1/2}e^{-x^2/2}H_n(x),
\end{equation}
where $H_n(x)$ is the Hermite polynomial of order $n.$
It follows that the wave functions $\psi_n(x)$ form an orthonormal basis in the Hilbert space $L^2(\mathbb{R})$ and  satisfy the orthogonality relations \begin{equation} \int_{\mathbb{R}}\psi_n(x)\psi_m(x)\,dx=\delta_{nm}.
 \end{equation}
From \eqref{H11} and \eqref{H22}, we have
\begin{align}\label{HA1}
&H_0=-\frac{1}{2}\frac{d^2}{dx^2}+\frac{1}{2}x^2-\frac{1}{2}R,\\&\label{HA2}
H_1=-\frac{1}{2}\frac{d^2}{dx^2}+\frac{1}{2}x^2+\frac{1}{2}R.
\end{align}
 It follows from equations \eqref{eq11} and \eqref{eq22} that
\begin{align}
&H_0\psi_{n}(x)=(n+\frac{1-(-1)^n}{2})\psi_{n}(x),\\&
H_1\psi_{n}(x)=(n+\frac{1+(-1)^n}{2})\psi_{n}(x).
\end{align}
The spectrum of $H_0$  consists only of the even numbers  starting with zero. Each level is degenerate except for the ground states which is unique and the spectrum of $H_1$ consists only of the odd numbers and each level is degenerate. In \cite{CorreaJE,PlyushchayQZ,CorreaPE}, the author  is used a very simple modification of supercharges \eqref{cha}
and Hamiltonian defined in \eqref{H11} and \eqref{H22},  and  showed   that  one can obtain a quantum system in the phase of the
broken supersymmetry (with all the doubled energy levels organized in  the
equidistant positive spectrum) described  by a linear superalgebra of the
form eqref{adam}. Moreover, in the same line one can obtain also the systems
described by nonlinear supersymmetric algebra; see  \cite{PlyushchayQZ,CorreaJE,CorreaPE}.\\

Instead  of the supercharges $Q_{i-}$ and $Q_{i+}$ defined
in \eqref{HH2}, we can also reformulate the algebra \eqref{adam} by
introducing a new hermitean operators $Q_i$  expressed in the form
\begin{equation}
\label{scharge}Q_i=\frac{1}{\sqrt{2}}\big(Q_{i-}+Q_{i+}\big),\quad i=0,\,1.\end{equation}
 Hermitian supercharges shown in \eqref{scharge}, represent a half of the corresponding pairs of the bosonized supercharges for the systems (2.14) and (2.15).
 From equations \eqref{HH2} and \eqref{H4} they 
share the property $Q^\dagger_{i}=Q_i$ and commute with the supersymmetric Hamiltonian $H_i,$ so it can be diagonalized  simultaneously with $H_i$. The supercharges $Q_i$ act on the wave functions as
\begin{align*}
&Q_0\psi_{2n}(x)=\sqrt{n}\,\psi_{2n-1}(x),\quad Q_0\psi_{2n+1}(x)=\sqrt{n+1}\,\psi_{2n+2}(x),\\&
Q_1\psi_{2n}(x)=\sqrt{n+1/2}\,\psi_{2n+1}(x),\quad Q_1\psi_{2n+1}(x)=\sqrt{n+1/2}\,\psi_{2n}(x).
\end{align*}
Put
\begin{align}\label{eig1}&\psi^{(0)}_{n\,\varepsilon}(x)=\frac{1}{2}(\psi_{2n+1}(x)
+\varepsilon\psi_{2n+2}(x)),\\&\label{eig2}
\psi^{(1)}_{n\,\varepsilon}(x)=\frac{1}{2}(\psi_{2n}(x)
+\varepsilon\psi_{2n+1}(x)),\quad \varepsilon=\pm 1.\end{align}
It is readily seen that
\begin{align*}
&Q_0\psi^{(0)}_{n\,\varepsilon}(x)=\varepsilon\sqrt{n+1}\psi^{(0)}_{n\,\varepsilon}(x),\\&
Q_1\psi^{(1)}_{n\,\varepsilon}(x)=\varepsilon\sqrt{n+1/2}\,\,\psi^{(1)}_{n\,\varepsilon}(x).
\end{align*}
Note that the eigenfunctions $\psi^{(i)}_{n\,\varepsilon}(x)$ of the supercharges $Q_i$ which are defined in equation \eqref{eig1} and \eqref{eig2} involve the lacunary Hermite polynomials $H_{2n}(x)$ and $H_{2n+1}(x)$.
\subsection{Associated Laguerre and Hermite polynomials}
The Hermite polynomials satisfy the three terms recurrence relation
\begin{equation}
\label{10}
\begin{split}
2xH_n(x)&= H_{n+1}(x)+2(n+1) H_{n-1}(x)\quad(n>0),\\
H_{-1}(x)&=0\quad  H_0(x)=1.
\end{split}
\end{equation}
The associated Hermite polynomials $H_n(x,c)$ are defined by the following three terms recurrence relation
\begin{equation}
\label{1}
\begin{split}
2xH_n(x,c)&= H_{n+1}(x,c)+2(n+1+c) H_{n-1}(x,c)\quad(n>0),\\
H_{-1}(x,c)&=0\quad  H_0(x,c)=1.
\end{split}
\end{equation}
They were introduced by Askey and Wimp in \cite{As}, where their weight functions and explicit formulas were also found.
When $c>-1$ , they satisfy the orthogonality relations
\begin{equation}
\int_{-\infty}^\infty \frac{H_n(x,c)H_m(x,c)}{|D_{-c}(ix\sqrt{2})|^2}\,dx=2^n\sqrt{\pi}\Gamma(n+c+1)\delta_{nm}.
\end{equation}
The function $D_{-c}(x)$ is a parabolic cylinder function
\begin{equation}
D_{2\nu}(2x)=2^\nu e^{-x^2}\Psi(-\nu,1/2;2x^2),
\end{equation}
and $\Psi(a,c;x)$ is the Tricomi  function \cite{Erd},
\begin{equation}
\Psi(a,c;x)=\frac{1}{\Gamma(a)}\int_0^\infty e^{-xt}t^{a-1}(1+t)^{c-a-1}\,dt,\quad \Re(a)>0,\,\Re(x)>0.
\end{equation}
Note that the polynomials $H_n(x,c)$ can be expressed in terms of  the associated Laguerre polynomials $L_n^{\nu}(x,c)$ and $\mathcal{L}_n^{\nu}(x,c)$,  which are defined by the three terms recurrence relations \cite{Ismail}
\begin{align}
&(2n+2c+\nu+1-x)L_n^{\nu}(x,c)=(n+c+1)L_{n+1}^{\nu}(x,c)+(n+c+\nu)L_{n-1}^{\nu}(x,c)\\&
L_0^{\nu}(x,c)=1,\quad L_1^{\nu}(x,c)=\frac{2c+\nu+1-x}{c+1}
\end{align}
and
\begin{align}
&(2n+2c+\nu+1-x)\mathcal{L}_n^{\nu}(x,c)=(n+c+1)\mathcal{L}_{n+1}^{\nu}(x,c)+(n+c+\nu)\mathcal{L}_{n-1}^{\nu}(x,c)\\&
\mathcal{L}_0^{\nu}(x,c)=1,\quad \mathcal{L}_1^{\nu}(x,c)=\frac{c+\nu+1-x}{c+1}
\end{align}
and have the orthogonality relations
\begin{equation}
\int_0^\infty L_n^{\nu}(x,c)L_m^{\nu}(x,c)\,x^\nu e^{-x}\frac{|\Psi(c,1-\nu;xe^{-i\pi})|^{-2}}{\Gamma(c+1)\Gamma(\nu+c+1)}\,dx
=\frac{(\nu+c+1)_n}{(\nu+1)_n}\delta_{nm}
\end{equation}
and \begin{equation}
\int_0^\infty \mathcal{L}_n^{\nu}(x,c)\mathcal{L}_m^{\nu}(x,c)\,x^\nu e^{-x}\frac{|\Psi(c,-\nu;xe^{-i\pi})|^{-2}}{\Gamma(c+1)\Gamma(\nu+c+1)}
\,dx=\frac{(\nu+c+1)_n}{(\nu+1)_n}\delta_{nm}.
\end{equation}
By \cite{Ismail}, we explicitly have
\begin{align}
&H_{2n}(x,c)=(-4)^n(1+c/2)_n L_n^{1/2}(x^2,c/2), \\& H_{2n+1}(x,c)=2x(-4)^n(1+c/2)_n\mathcal{L}_n^{-1/2}(x^2,c/2).
\end{align}
For example, the first polynomials are
\begin{align}
&H_0(x,c)=1,\,\,H_1(x,c)=2x,\\&
H_2(x,c)=4x^2-2(c+1),\,\,H_3(x,c)=8x^3-4(2c+3)x.
\end{align}
Furthermore,
\begin{equation}
H_n(x,c)=\sum_{k=0}^{[\tfrac{n}{2}]}\frac{(-2)^k(c)_k(n-k)!}{k!(n-2k)!}H_{n-2k}(x).
\end{equation}
The associated Laguerre and Hermite polynomials are also  birth and death processes polynomials, see Ismail, Letessier and Valent  \cite{Ismail}. \\

The lacunary Hermite polynomials $H_{nr+s}(x)$ may be defined by the following generating function \cite{Ges}:
\begin{equation}
G(r,s)=\sum_{n=0}^\infty H_{nr+s}(x)\frac{t^{nr+s}}{(nr+s)!},
\end{equation}
for arbitrary integers $r\geq 1$ and $0\leq s\leq r-1$. The specific  generating functions are
\begin{align*}
\\&G(1,0) =e^{-t^2+2xt}~,  \\&
G(2,0) = e^{-t^2} \cosh(2xt),  \\&
G(2,1) = e^{-t^2} \sinh(2xt).
\end{align*}
Note that  $G(1,0)$ is the generating function of the Hermite polynomials, $G(2,0)$ and $G(2,1)$ are  linear combinations of $G(1,0)$
(the same function with $x$ changed by $-x$). Physically, these generating functions are very useful
 to describe the decomposition of ordinary, even and odd coherent states \cite{Nie}.
They also appeared in a number of problems, including for example the treatment of Cauchy problems in partial differential equations
 \cite{Pen,Dab} and in calculations involving coherent and squeezed states \cite{Dab}.\indent
\section{Realization of fractional supersymmetric quantum mechanics}
Following Khare \cite{KhareGY}, a fractional supersymmetric quantum mechanics model of arbitrary order $r$ can be developed by generalizing the fundamental equations \eqref{adam} to the forms
\begin{align}
& Q_\pm^{r}=0,\quad [H,Q_\pm]=0, \quad Q_-^\dagger=Q_+,\\&
Q_{-}^{r-1}Q_{+}+Q_{-}^{r-2}Q_+ Q_-+\dots+ Q_-Q_+ Q_{-}^{r-2}+Q_+ Q_{-}^{r-1}=(r-1)Q_{-}^{r-2}H.
\end{align}
It was shown in \cite{2} that the parasupercharges $Q_-$ and $Q_+$ can be represented
by $r \times r$ matrices as natural extensions of the $r = 2$ scheme. Note that the fractional supersymmetric quantum mechanics was also developed without an explicit introduction of parafermonic degrees of freedom by Plyushchay \cite{PlyushchayDC}, see also \cite{PlyushchayTY,KlishevichNQ}.\\

We shall now present a construction of fractional supersymmetric quantum mechanics of order $r$ by using a Klein type operator. Let $\mathcal{F}$ be  the Fock space of states for the ordinary bosonic oscillator generated  by the complete set of orthonormal vectors $\{|n\rangle\}_{n=0}^\infty$:
\begin{equation}|n\rangle=\frac{1}{\sqrt{n!}}a_+^n|0\rangle,\quad n=1,\,2,\,\dots,\label{r1}\end{equation}
constructed over the vacuum state defined by the relation,
$$a_-|0\rangle=0.$$
The  creation and annihilation  operators   $a_+$ and $a_-$ may be realized
in  Fock space  as
\begin{align}\label{repp}
&a_- |n>= \sqrt{n}|n-1>\quad \mbox{and}\quad a_+|n\rangle=\sqrt{n+1}|n+1>,
\end{align}
The number operator $N$, which counts the number of
particles, or objects in general in a system, is defined by $N = a_+ a_-$ and is represented as
\begin{equation*}
N|n\rangle= n|n\rangle.
\end{equation*}
  It satisfies the commutation
relations
\begin{align*}
[a_-, N]=a_-, \quad [a_+, N]=-a_+.
\end{align*}
Now, let $r\geq 2$ be a fixed integer. We introduce the generalized Klein operator $K$ defined by the relations (see, \cite{Bouz2,Bouz3})
\begin{equation}
K^{r}=1,\quad a_-K=\varepsilon_{r} Ka_-
\quad a_+ K=\varepsilon_{r}^{-1} Ka_+,\quad K^\dagger=K^{-1},
\end{equation}
where
$\varepsilon_r=e^{\frac{2i\pi}{r}}$ is a primitive root of unity. The operator $K$ can be realized by using the number operator $N$ in the form $K=e^{\frac{2i\pi }{r}N},$ and acts on  the  space $\mathcal{F}$ as
\begin{equation}
K\mid n\rangle=\varepsilon_r ^n \mid n\rangle, \end{equation}
and so introduces
$\mathbb{Z}_r$-grading structure on the Fock space $\mathcal{F}$  as
\begin{equation}
\mathcal{F}=\bigoplus_{j=0}^{r-1}\mathcal{F}_j,
\end{equation}
where
$$\mathcal{F}_j=\{\mid nr+j\rangle:n=0,\,1,\,\dots\}.$$
For $j=0,\,1,\,\dots,\,r-1,$ we denote by $\Pi_{j},$ the orthogonal projection from $\mathcal{F}$ onto its subspace $\mathcal{F}_j$, which can be represented as
\begin{equation}
\Pi_{j}=\frac{1}{r}\sum_{l=0}^{r-1}\varepsilon_r^{-lj}K^l.
\end{equation}
Equivalently, the operators $K^l$ are expressed in terms of $\Pi_{j}$ as follows
\begin{equation}
K^l=\sum_{j=0}^{r-1}\varepsilon_r^{-lj}\Pi_{j},\quad l=0,\dots,r-1.
\end{equation}
It is clear that they form a system of resolution of the identity:
\begin{equation}\label{or1}
\Pi_0+\Pi_1+\dots+\Pi_{r-1}=1\quad \mbox{and}\quad \Pi_i\Pi_j=\delta_{ij}\Pi_i.
\end{equation}
The action of $\Pi_j$ on $\mathcal{F}$ can be taken to be
\begin{equation}
\Pi_j\mid kr+l\rangle=\delta_{jl}\mid kr+l\rangle.
\end{equation}
\subsection{Supercharges}Following \cite{PlyushchayDC,PlyushchayTY,Daoud}, for every fixed integer $j$ such that $0\leq j \leq r-1,$  one can construct  two supercharges $Q_{j-}$ and $Q_{j+}$ by means of the orthogonal projections $\Pi_k$ and the creation  and annihilation  operators $a_+$ and  $a_-$ introduced before as follows
\begin{align}\label{Q1}
&Q_{j-}=a_-(1-\Pi_j)=(1-\Pi_{j-1})a_-,\\& \label{Q2}Q_{j+}=a_+(1- \Pi_{j-1})=(1-\Pi_{j})a_+.
\end{align}
Obviously, the operators $a_-,$  $a_+$ , $Q_{j-}$ and $Q_{j+}$ satisfy the intertwining relations
\begin{align}\label{i1}
&\Pi_{j-1} a_-=a_-\Pi_{j}\quad \mbox{and} \quad
a_+\Pi_{j-1} =\Pi_{j}a_+\\&\label{i2}Q_{j-}a_-=a_-Q_{j+1\,-}\quad \mbox{and}\quad a_+Q_{j+}=Q_{j+1\,+}a_+.
\end{align}It is easy to see that the supercharges $Q_{j-}$ and $Q_{j+}$ have the Hermitian conjugation relations
\begin{equation}
Q_{j-}^\dagger=Q_{j+}.
\end{equation}
Furthermore, the operators $a_-$ and $a_+$ can be represented as
\begin{equation}a_-=\frac{1}{r-1}\sum_{j=0}^{r-1}Q_{j\,-}\quad \mbox{and} \quad a_+=\frac{1}{r-1}\sum_{j=0}^{r-1}Q_{j\,+}.\end{equation}
Now, by making use of  relations \eqref{i1} and \eqref{i2},  we easily obtain
$$ Q_{j-}^{k} = \begin{cases}a_{-}^{k}\big(1-\Pi_j-\dots\, \Pi_{j+k}\big) & \mbox{ if }\, k\leq r-1-j\\ a_{-}^{k}\big(1-\Pi_j-\dots\, \Pi_{N-1}-\dots-\Pi_{k+j-r-1}\big) &\mbox{ if }\, r-j\leq k\leq r-1.\end{cases} $$
In particular, for $k=r$, we get
\begin{align}
Q_{j-}^r&=a^r\big(1-\Pi_j-\dots\, \Pi_{r-1}-\dots-\Pi_{j-1}\big)=0.
\end{align}
Similarly, we have $$Q_{j+}^r=0.$$ Hence, the supercharge operators $Q_{j\pm}$  are nilpotent operators of order $r.$
\subsection{Supersymmetric Hamiltonian of order $r$}
We can use the operators $a_-$ , $a_+$ and the orthogonal projections $\Pi_0,\,\Pi_1,\,\dots,\Pi_{r-1}$ introduced before for obtaining a realization of supersymmetric Hamiltonian.\\
The first main result, is
\begin{theorem} For fixed $j=0,\,\dots,\, r-1,$  the hermitian operators $Q_{j-},$ $Q_{j-}$ and $H_j$ given by
\begin{align*}
&Q_{j+}=a_+(1-\Pi_{j-1}),\quad  Q_{j-}=a_-(1-\Pi_{j}),\\&
H_j=a_+a_-+\sum_{k=0}^{r-1}(1+k-r/2)\Pi_{r+j-k-1}
\end{align*}
satisfy the commutation relations
\begin{align*}
& Q_{j\pm}^{r}=0,\quad [H,Q_{j\pm}]=0, \quad Q_{j-}^\dagger=Q_{j+},\\&
Q_{j-}^{r-1}Q_{j+}+Q_{j-}^{r-2}Q_{j+} Q_{j-}+\dots+ Q_{j-}Q_{j+} Q_{j-}^{r-2}+Q_{j+} Q_{j-}^{r-1}=(r-1)Q_{j-}^{r-2}H_j.
\end{align*}
Furthermore, the superoscillator  $H_j$ has $r$-fold degenerate spectrum and acts on the occupation basis as
\begin{equation*}
H_j\mid nr+j+s\rangle=(rn+r/2+j)\mid nr+j+s\rangle.
\end{equation*}
\end{theorem}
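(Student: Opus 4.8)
The plan is to treat the three families of relations separately, exploiting the block structure that the $\mathbb{Z}_r$-grading imposes on $\mathcal F$. Since the nilpotency $Q_{j\pm}^r=0$ and the conjugation $Q_{j-}^\dagger=Q_{j+}$ have already been established in the preceding subsections, the genuinely new content is the eigenvalue formula, the commutation $[H_j,Q_{j\pm}]=0$, and the parasupersymmetry identity.

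First I would establish the spectral formula. Writing $N=a_+a_-$, so that $N|nr+j+s\rangle=(nr+j+s)|nr+j+s\rangle$, I observe that among the $r$ projectors $\Pi_{r+j-k-1}$, $k=0,\dots,r-1$ (indices read modulo $r$), whose labels run through a complete residue system, exactly one is non-zero on $|nr+j+s\rangle$, namely the one whose label is congruent to $j+s$, that is $k=r-s-1$. Its coefficient is $1+k-r/2=r/2-s$, so the eigenvalue is $(nr+j+s)+(r/2-s)=rn+r/2+j$, independent of $s\in\{0,\dots,r-1\}$. This both proves the last displayed equation and exhibits the $r$-fold degeneracy.

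Next, for the commutation relations I would organize $\mathcal F$ into the blocks $B_n=\mathrm{span}\{|nr+j+s\rangle:s=0,\dots,r-1\}$, on each of which $H_j$ acts as the scalar $\lambda_n=rn+r/2+j$. The key point is that $Q_{j\pm}$ preserve every block: in $Q_{j-}=a_-(1-\Pi_j)$ the factor $1-\Pi_j$ annihilates precisely the bottom state $|nr+j\rangle$ of $B_n$, the only one whose image under $a_-$ would leave $B_n$; symmetrically $Q_{j+}=a_+(1-\Pi_{j-1})$ kills the top state $|(n+1)r+j-1\rangle$, the only one $a_+$ would push out of $B_n$. Hence $Q_{j\pm}B_n\subseteq B_n$, so on $B_n$ one has $H_jQ_{j\pm}=\lambda_nQ_{j\pm}=Q_{j\pm}H_j$, and since the blocks exhaust $\mathcal F$ we get $[H_j,Q_{j\pm}]=0$.

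Finally, the parasupersymmetry identity is the main obstacle, and I would prove it block by block. On $B_n$, writing $\alpha_s=nr+j+s$ and $e_s=|nr+j+s\rangle$, the supercharges act as truncated ladder operators $Q_{j-}e_s=\sqrt{\alpha_s}\,e_{s-1}$ (with $Q_{j-}e_0=0$) and $Q_{j+}e_s=\sqrt{\alpha_{s+1}}\,e_{s+1}$ (with $Q_{j+}e_{r-1}=0$). Both sides of the identity lower the index by exactly $r-2$, so each is supported on $e_{r-2}$ and $e_{r-1}$ and it suffices to match two scalar coefficients. Carrying out the telescoping products in $\sum_{l=0}^{r-1}Q_{j-}^{r-1-l}Q_{j+}Q_{j-}^l$, the surviving terms acting on $e_{r-1}$ each contribute $\alpha_{r-l}\sqrt{\alpha_2\cdots\alpha_{r-1}}$, so the left side equals $\big(\sum_{i=1}^{r-1}\alpha_i\big)\sqrt{\alpha_2\cdots\alpha_{r-1}}\,e_1$, and analogously on $e_{r-2}$ it equals $\big(\sum_{i=1}^{r-1}\alpha_i\big)\sqrt{\alpha_1\cdots\alpha_{r-2}}\,e_0$. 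The right side $(r-1)\lambda_n Q_{j-}^{r-2}$ produces $(r-1)\lambda_n\sqrt{\alpha_2\cdots\alpha_{r-1}}\,e_1$ and $(r-1)\lambda_n\sqrt{\alpha_1\cdots\alpha_{r-2}}\,e_0$, so the two coincide precisely because of the arithmetic identity $\sum_{i=1}^{r-1}(nr+j+i)=(r-1)(nr+j+r/2)=(r-1)\lambda_n$, which is the crux of the whole computation; the bookkeeping of which summands survive and of the square-root factors is the part that will demand the most care.
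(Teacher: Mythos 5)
Your proof is correct, but it takes a genuinely different route from the paper's. The paper proves the algebra relations in a basis-free, operator-algebraic way: it starts from the product identities $Q_{j+}Q_{j-}^{r-1}=a_+a_-^{r-1}\Pi_{j-2}$, $Q_{j-}^{r-1}Q_{j+}=a_-^{r-1}a_+\Pi_{j-1}$ and $Q_{j-}^{r-1-k}Q_{j+}Q_{j-}^{k}=a_-^{r-1-k}a_+a_-^{k}(\Pi_{j-2}+\Pi_{j-1})$, sums them using $a_-^{k}a_+=a_+a_-^{k}+ka_-^{k-1}$, and then observes that $Q_{j-}^{r-2}\sum_{k=2}^{r-1}(1+k-\tfrac r2)\Pi_{r+j-k-1}=0$, so the extra projector terms can be inserted for free, which is precisely how $H_j$ is identified; the commutation $[H_j,Q_{j-}]=0$ is then checked by shuffling projectors through $a_\pm$ via the intertwining relations, with $[H_j,Q_{j+}]=0$ following from hermiticity, and only at the very end is the occupation-basis action computed to read off the spectrum. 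You instead work entirely in the Fock basis: you first diagonalize $H_j$ (your residue computation $k=r-s-1$, coefficient $r/2-s$, eigenvalue $rn+r/2+j$ is exactly right), then get both commutators at once from the observation that $Q_{j\pm}$ preserve each $r$-dimensional block on which $H_j$ is scalar, and finally verify the multilinear identity by matching the two surviving coefficients, reducing everything to $\sum_{i=1}^{r-1}(nr+j+i)=(r-1)(nr+j+r/2)$. I checked your telescoping bookkeeping (the $l=0$ term dying on $e_{r-1}$, the $l=r-1$ term dying on $e_{r-2}$, and the combined radicand $\alpha_{r-l}^2\prod_{i=2}^{r-1}\alpha_i$) and it is sound. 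Your approach is more elementary — it needs none of the imported identities from the literature — and it makes the $r$-fold degeneracy and the block structure exploited in Section 4 transparent; the paper's approach buys representation-independence, since its derivation uses only the projector algebra and ladder commutators, not the specific Fock realization.

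One point to patch: your claim that "the blocks exhaust $\mathcal{F}$" fails for $j\geq 1$, since the states $\mid 0\rangle,\dots,\mid j-1\rangle$ lie in no $B_n$ with $n\geq 0$; they form a truncated block of dimension $j$ (formally $n=-1$, $s=r-j,\dots,r-1$). The fix is cheap: on this block the same ladder formulas hold with bottom coefficient $\alpha_{r-j}=\sqrt{0}$, $H_j$ is again scalar with eigenvalue $j-r/2$ (the $n=-1$ value of your formula), and your coefficient computation goes through verbatim because the vanishing coefficient truncates every transition that would exit the block — I verified this explicitly for $j=r-1$. Note also that this lowest level is only $j$-fold degenerate, so the blanket "$r$-fold degenerate spectrum" holds with this one exception; that imprecision is already present in the paper's own statement, so you inherit rather than introduce it, but your basis-bound argument, unlike the paper's basis-free one for the operator identities, is obliged to say a sentence about the truncated block to be complete.
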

\begin{proof}
 We  first summarize the key properties \cite{Daoud}:
 \begin{align}\label{d1}
&Q_{j+} Q_{j-}^{r-1}=a_+ a_{-}^{r-1}\Pi_{j-2},\\&\label{d2} Q_{j-}^{r-1}Q_{j+} =a_{-}^{r-1} a_+\Pi_{j-1},\\&\label{d3}Q_{j-}^{r-1-k} Q_{j+}Q_{j+}^{k}=a_{-}^{r-1-k}a_+ a_{-}^k(\Pi_{j-2}+\Pi_{j-1}),\quad k=1,\dots,r-2.
\end{align}
From the  relation $a_{-}^ka_+=a_+a_{-}^{k}+ka^{k-1}$ and relations \eqref{d1}, \eqref{d2} and \eqref{d3}, we have
\begin{align}
\sum_{k=0}^{r-1}Q_{j-}^{r-1-k}Q_{j+} Q_{j-}^k&=(r-1)Q_{j-}^{r-2}\big[a_+a_-+
(2-\frac{r}{2})\Pi_{j-2}(r)+(1-\frac{r}{2})\Pi_{j-1}\big].
\end{align}
On the other hand,  we get from \eqref{or1} that
\begin{equation}
Q_{j-}^{r-2}\sum_{k=2}^{r-1}(1+k-\tfrac{r}{2})\,\Pi_{r+j-k-1}=0.
\end{equation}
This yields
\begin{align}
\sum_{k=0}^{r-1}Q_{j-}^{r-1-k}Q_{j+} Q_{j-}^k&=(r-1)Q_{j-}^{r-2}H_j,
\end{align}
where the operator $H_j$ is given explicitly by
\begin{align}
&H_j=a_+ a_-+\sum_{k=0}^{r-1}(1+k-\tfrac{r}{2})\,\Pi_{r+j-k-1}.
\end{align}
It remains to prove that
\begin{align}
[H,Q_{j-}]=[H,Q_{j+}]=0.
\end{align}
Indeed, from \eqref{d3}, we have
\begin{align*}
Q_{j-}H&=a_{-}a_+a_-(1-\Pi_{j})+\sum_{k=0}^{r-2}
(1+k-\frac{r}{2})\,a_-(1-\Pi_{j})\Pi_{r+j-k-1}
\\&=a_{+}a_-a_-(1-\Pi_{j})+\sum_{k=0}^{r-2}
(2+k-\frac{r}{2})\,\Pi_{r+j-k-2}\big)(1-\Pi_{j-1})a_-
\\&=a_{+}a_-a_-(1-\Pi_{j})+\sum_{k=1}^{r-1}
(1+k-\tfrac{r}{2})\,\Pi_{r+j-k-1}(1-\Pi_{j-1})a_-\\&
=HQ_{j-}.
\end{align*}
Thus,
\begin{equation} \label{dd1}
[H_j,Q_{j-}]=0.
\end{equation}
Since the orthogonal projection $\Pi_k$ is hermitian conjugate $\Pi_k^\dagger=\Pi_k$, we easily check that the operators $H_j$ share the property
\begin{equation}\label{h2}
H_j^\dagger=H_j.
\end{equation}
Combining this with \eqref{dd1}, we get
\begin{equation}
[H_j,Q_{j+}]=0
\end{equation}
Given representation \eqref{repp}, 
the action of operators $Q_{j-}$ and $Q_{j+}$ reduces  to
\begin{align}
&Q_{j-}|nr+j+s\rangle=\sqrt{nr+j+s}|nr+j+s-1\rangle,\,\, s=1,\dots,r-1,\\& Q_{j+}|nr+j+s\rangle=\sqrt{nr+j+s+1}|nr+j+s+1\rangle,\,\,s=0,\dots,r-2,\\&
Q_{j-}|nr+j\rangle=0,\quad Q_{j+}|(n+1)r+j-1\rangle=0
\end{align}
and hence,
\begin{equation}
H_j\mid nr+j+s\rangle=(rn+r/2+j)\mid nr+j+s\rangle.
\end{equation}
\end{proof}
\section{Simultaneous diagonalization}
Our second   result is the following.
\begin{theorem}For fixed integer $j=0,\,\dots,\,r-1,$ the operator $Q_j$ defined by  \begin{equation}Q_j=\frac{1}{\sqrt{2}}(Q_{j-}+Q_{j+} ),\end{equation}is self-adjoint and commutes with the oscillator $H_j,$ that is
\begin{equation}
Q_j^\dagger=Q_j,\quad [Q_j,H_j]=0.
\end{equation}
Furthermore,  for every integer $n$, the operator $Q_j$  leaves  the subspace spanned  by the states $\{\, |nr+j\rangle,\,\dots,\,
|(n+1)r+j-1\rangle\}$
  invariant and restricted operator  has  a nondegenerate spectrum
  which consists from the roots of the associated Hermite polynomials $H_r(x,c),$ with $c=nr+j-1.$
\end{theorem}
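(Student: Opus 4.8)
The plan is to treat the algebraic and the spectral assertions separately. The algebraic part is immediate from the first theorem: since $Q_{j-}^\dagger=Q_{j+}$, linearity of the adjoint gives $Q_j^\dagger=\frac{1}{\sqrt2}(Q_{j+}+Q_{j-})=Q_j$, and from $[H_j,Q_{j\pm}]=0$ we obtain $[H_j,Q_j]=\frac{1}{\sqrt2}([H_j,Q_{j-}]+[H_j,Q_{j+}])=0$. Hence $Q_j$ and $H_j$ admit a common eigenbasis, and since $H_j$ acts as the scalar $rn+r/2+j$ on each block $V_n:=\mathrm{span}\{|nr+j\rangle,\dots,|(n+1)r+j-1\rangle\}$, the spectral problem decouples: it suffices to diagonalize $Q_j$ on each $r$-dimensional block $V_n$ separately.

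Next I would verify that $Q_j$ leaves $V_n$ invariant. Writing $e_s:=|nr+j+s\rangle$ for $s=0,\dots,r-1$, the action formulas from the first theorem read $Q_{j-}e_0=0$, $Q_{j+}e_{r-1}=0$, $Q_{j-}e_s=\sqrt{nr+j+s}\,e_{s-1}$, and $Q_{j+}e_s=\sqrt{nr+j+s+1}\,e_{s+1}$. Thus $Q_j=\frac{1}{\sqrt2}(Q_{j-}+Q_{j+})$ maps $V_n$ into itself and, in the basis $(e_0,\dots,e_{r-1})$, is represented by a symmetric tridiagonal (Jacobi) matrix with vanishing diagonal and off-diagonal entries $b_m=\frac{1}{\sqrt2}\sqrt{nr+j+m}$, $m=1,\dots,r-1$.

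The core of the argument is to match the characteristic polynomial of this Jacobi matrix with the associated Hermite recurrence. Letting $P_k(\lambda)$ denote the $k$-th leading principal minor of $\lambda I-Q_j|_{V_n}$, the standard Sturm recurrence gives $P_0=1$, $P_1=\lambda$, and $P_k=\lambda P_{k-1}-b_{k-1}^2P_{k-2}=\lambda P_{k-1}-\tfrac12(nr+j+k-1)P_{k-2}$. Putting $c=nr+j-1$ turns this into $P_k=\lambda P_{k-1}-\tfrac12(c+k)P_{k-2}$, which is precisely the recurrence \eqref{1} for the associated Hermite polynomials under the scaling $P_k(\lambda)=2^{-k}H_k(\lambda,c)$; the initial data $P_0=H_0=1$ and $2P_1=H_1=2\lambda$ fix the normalization. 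Therefore $P_r(\lambda)=2^{-r}H_r(\lambda,c)$, and the spectrum of $Q_j|_{V_n}$ is exactly the zero set of $H_r(\cdot,c)$.

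Finally, nondegeneracy is automatic: every off-diagonal entry $b_m$ is strictly positive because $nr+j+m\ge1$, and a Jacobi matrix with nonvanishing off-diagonals has $r$ distinct real eigenvalues; equivalently $H_r(\cdot,c)$, being a scalar multiple of the characteristic polynomial of a real symmetric tridiagonal matrix with positive subdiagonal, has $r$ simple real zeros. I expect the only delicate point to be the bookkeeping in the penultimate step—correctly aligning the index shift and the factor $2^{-k}$ so that the matrix minors reproduce the recurrence \eqref{1} and the value $c=nr+j-1$ emerges; the remaining steps are routine linear algebra.
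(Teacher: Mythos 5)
Your proposal is correct and takes essentially the same route as the paper: both reduce $Q_j$ on each $r$-dimensional block to the symmetric tridiagonal Jacobi matrix with zero diagonal and off-diagonal entries $\sqrt{(nr+j+s)/2}$, match the resulting three-term recurrence (after the $2^{-k}$ monic rescaling) with the associated Hermite recurrence at $c=nr+j-1$, and deduce simple real eigenvalues from the nonvanishing subdiagonal. The only cosmetic difference is that you get the characteristic polynomial directly from the principal-minor (Sturm) recurrence, while the paper runs the eigenvector-expansion formalism for Jacobi matrices and thereby also records the spectral weights $w_s$, which the theorem statement itself does not require.
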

\begin{proof}
We have:
\begin{align}\label{Re1}
&Q_j|nr+j+s\rangle=a_s(r,n,j)|nr+j+s-1\rangle+a_{s+1}(r,n,j)|nr+j+s+1\rangle,\,\, \\&\label{Re2}
Q_j(|nr+j\rangle=a_1(r,n,j)|nr+j+1\rangle,\\&\label{Re3} Q_j|(n+1)r+j-1\rangle=a_{r-1}(r,n,j)|(n+1)r-2\rangle,\end{align}
where $$a_s(r,n,j):=\sqrt{(nr+j+s) /2},\quad s=1,\dots,\,r-1.$$
From \eqref{Re1}, \eqref{Re2} and \eqref{Re3}, we see that the operator $Q_j$ leaves the subspace generated by the states $\mid nr+j+s\rangle, s=0,\,1,\,\dots,\,r-1,$ invariant. Hence, it can be
represented in this basis by the following $r\times r$ tridiagonal
Jacobi matrix $A_{n\,r}$:
\[
A_{n\,r} =
 \begin{pmatrix}
  0 & a_1(r,n,j)& 0 &    \\
  a_{1}(r,n,j) & 0 & a_2(r,n,j) & 0  \\
   0  &  a_2 (r,n,j)& 0 & a_3(r,n,j)    \\
   &   &  \ddots &    \ddots  \\
      & & \dots &a_{r-2} (r,n,j)&0 & a_{r-1} (r,n,j)\\
            & & \dots &0 & a_{r-1} (r,n,j)& 0
\end{pmatrix}.
\]
It is well known that if the coefficients of the subdiagonal of some Jacobi matrix are different from zero \cite{Chihara}, then all the eigenvalues of this matrix are real and nondegenerate.
Introduce the normalized
eigenvectors $\mid n,j,\,s\rangle $  of the supercharge $Q_{j}$:
\begin{equation}
Q_j\mid n, j,\,s\rangle=x_{s}(r,n,j)
\mid n,j,\,s\rangle,\,\,s=0,\,\dots,r-1.
\end{equation}
The eigenvalues $x_{s}(r,n,j)$ are real and nondegenerate:
$$x_{s}(r,n,j)\neq x_{t}(r,n,j)\quad \mbox{if}\quad s\neq t.$$
The vector $\mid n, j,s\rangle $ can be expanded in the basis  $\mid nr+j+s\rangle, \,\,s=0,\,1,\,\dots,\,r-1,$
\begin{equation}\label{w1}
\mid n, j,\,s\rangle=\sum_{k=0}^{r-1}\Upsilon_{sk}|nr+j+k\rangle.
\end{equation}
The coefficient $\Upsilon_{sk}$ can be rewritten :
\begin{equation}
\Upsilon_{sk}=\sqrt{w_s}P_{k}(x_{s}(r,n,j)).\label{111}
\end{equation}
By construction, we have $$P_0(x_s(r,n.j))=1\quad \text{and}\quad \sqrt{w_s}=\Upsilon_{s0}.$$
Relation \eqref{w1} becomes \begin{equation}\label{w12}
\mid n, j,\,s\rangle=\sum_{k=0}^{r-1}\sqrt{w_s}P_{k}(x_{s}(r,n,j))|nr+j+k\rangle.
\end{equation}
It follows from \eqref{Re1}, \eqref{Re2} and \eqref{Re3}, that the coefficients $P_k(x)$ obey
the three term recurrence relation
\begin{align}a_{k}(r,n,j)\,P_{k-1}(x)+a_{k+1}(r,n,j)\,P_{k+1}(x)=x P_k(x),\quad k=0,\dots r-1,
\end{align}
and hence are orthogonal polynomials with the initial condition $P_{-1}(x) = 0$ . Since  the occupation basis is  orthonormal,
\begin{equation}\label{w112}
\sum_{k=0}^{r-1}w_sP_{k}(x_{s}(r,n,j))P_{l}(x_{s}(r,n,j))=\delta_{kl},
\end{equation}
and this provides the following inverse relation of \eqref{w12}:
\begin{equation}\label{w1112}
|nr+j+k\rangle=\sum_{s=0}^{r-1}\sqrt{w_s}P_{k}(x_{s}(r,n,j))\mid n, j,\,s\rangle.
\end{equation} The monic orthogonal polynomials $\widetilde{P}_k(x),$ related to $P_{k}(x),$ satisfy the three term recurrence relation
\begin{align}\label{r22}&x \widetilde{P}_{k}(x)=\widetilde{P}_{k+1}(x)+(nr+j+k)/2\, \widetilde{P}_{k-1}(x)\,\,k=0,\,\dots,\,r-1,\\&
 \widetilde{P}_{-1}(x)=0, \quad \widetilde{P}_0(x)=1.\label{r222}
\end{align}
From the three terms recurrence relations \eqref{1}, we see that the polynomials $\widetilde{P}_{k}(x)$ satisfy  \eqref{r22}
and \eqref{r222} and  can be identified with the associated Hermite polynomial $H_k(x,c),$ with $c=nr+j-1.$ Namely, we have
\begin{equation}
\widetilde{P}_k(x)=2^{-k}H_k(x,nr+j-1),\quad k=0,\,1,\dots,r-1.
\end{equation}
It is well-known from the
theory of orthogonal polynomials \cite{Ismail, Chihara} that the eigenvalues $x_{s}(r,n,j)$ of the Jacobi matrix $A_{n\,r}$ coincide with the roots of the characteristic  polynomial $\widetilde{P}_r(x)$ given by \begin{equation}
\widetilde{P}_r(x)=(x-x_{0}(r,n,j))\dots(x-x_{r-1}(r,n,j))
\end{equation}
and that the discrete weights $w_s$ can be expressed by (see, \cite{Ismail})
\begin{equation}
w_s=\frac{\xi_{r-1}}{H_{r-1}(x_{s}(r,n,j),nr+j-1)H'_r(x_{s}(r,n,j),nr+j-1)},
\quad s=0,1,\dots,r-1,\end{equation}
where the normalization constants are given by
\begin{equation}
\xi_{r-1}=2^{r}\frac{\Gamma(r+nr+j-1)}{\Gamma(nr+j)}.
\end{equation}
\end{proof}

\section{Conclusion}
We considered  a fractional supersymmetric quantum mechanics of finite order  having a structure similar
to one constructed by Pluyshchay \cite{PlyushchayDC,PlyushchayTY}. The Hamiltonian and the supercharges involve  reflection operator. We showed this system to be exactly solvable and found that its wave functions are expressed in terms of lacunary Hermite polynomials and associated Hermite polynomials.


\begin{thebibliography}{xxx}
\bibitem{Witten:1981nf}
  E.~Witten, Dynamical Breaking of Supersymmetry,
  Nucl.\ Phys.\ B {\bf 188} (1981) 513.
\bibitem{NicolaiXP}  H.~Nicolai,   Supersymmetry and Spin Systems,
J.\ Phys.\ A {\bf 9}, 1497 (1976).
\bibitem{PlyushchayDC}  M.~S.~Plyushchay,  Supersymmetry without fermions, arXiv:
hep-th/9404081; Deformed Heisenberg algebra, fractional
spin fields and supersymmetry without fermions, Ann.
Phys. (N.Y.) 245, 339 (1996).
 \bibitem{PlyushchayTY} M.~S.~Plyushchay,  Deformed Heisenberg algebra with reflection,
Nucl. Phys. B {\bf 491}, 619 (1997).

\bibitem{CorreaJE}  F.~Correa and M.~S.~Plyushchay, Hidden supersymmetry in quantum bosonic systems,
Annals Phys.  {\bf 322}, 2493 (2007).
 \bibitem{GamboaBZ} J.~Gamboa, M.~Plyushchay and J.~Zanelli,  Three aspects of bosonized supersymmetry
 and linear differential field equation with reflection,  Nucl. Phys. B {\bf 543}, 447 (1999).
\bibitem{19} M. R\"osler,  Dunkl operators: theory and applications \emph{ Orthogonal Polynomials and Special Functions (Lecture Notes in Mathematics)} vol 1817 (Berlin: Springer)(2003) pp 93--135.
\bibitem{18}C. F. Dunkl and Y. Xu, (2001) Orthogonal polynomials of several variables \emph{Encyclopedia
of Mathematics and Its Applications} vol 81 (Cambridge: Cambridge University Press)

\bibitem{PostPE}   S.~Post, L.~Vinet and A.~Zhedanov, Supersymmetric Quantum Mechanics with Reflections,
J.\ Phys.\ A {\bf 44}, 435301 (2011).
\bibitem{KhareGY} A.~Khare,  Parasupersymmetric quantum mechanics of arbitrary order,
J.\ Phys.\ A {\bf 25}, L749 (1992).
\bibitem{2}F. ~Cooper, A. ~Khare and U. ~Sukhatme, 2001 \emph{ Supersymmetry in Quantum Mechanics} (Singapore: World Scientific)
\bibitem{Daoud}M. ~Daoud and M. ~Kibler, in Symmetry and Structural Properties of Condensed Matter,
edited by T. Lulek, B. Lulek and A. Wal (World Scientific, Singapore, 2001);
in Proceedings of the Sixth International Wigner Symposium (Bogazici Univ. Press,
Istanbul, Turkey, 2002).
\bibitem{QuesneCB} C.~Quesne and N.~Vansteenkiste,
C(lambda) extended harmonic oscillator and (para)supersymmetric quantum mechanics,
Phys.\ Lett.\ A {\bf 240}, 21 (1998).
\bibitem{JakubskyKI}   V.~Jakubsky, L.~M.~Nieto and M.~S.~Plyushchay, The origin of the hidden supersymmetry,
Phys.\ Lett.\ B {\bf 692}, 51 (2010).
\bibitem{PlyushchayQZ} M.~Plyushchay,
Hidden nonlinear supersymmetries in pure parabosonic systems,
Int.\ J.\ Mod.\ Phys.\ A {\bf 15}, 3679 (2000).
\bibitem{CorreaPE}   F.~Correa, L.~M.~Nieto and M.~S.~Plyushchay,
Hidden nonlinear supersymmetry of finite-gap Lame equation,  Phys.\ Lett.\ B {\bf 644}, 94 (2007).
\bibitem{As}R. ~Askey and J. ~Wimp, Associated Laguerre and Hermite polynomials, Proc. Roy. Soc.
Edinburgh Sect. A 96 (1984), no. 1--2, 15--37. MR 741641.
\bibitem{Erd}A.~Erd\'{e}lyi, W. ~Magnus, F. ~Oberhettinger and  G. F. ~Tricomi,  (1953) Higher Transcendental Functions, volume 1.
New York: McGraw-Hill.
\bibitem{Ismail}M. E. H. ~Ismail, Classical and quantum orthogonal polynomials in one variable, Encyclopedia of Mathematics and its
Applications, Cambridge University Press, November 2005. ISBN 0521782015.
\bibitem{Ges}I. M. ~Gessel  and P. ~Jayawant, A triple lacunary generating function for Hermite
polynomials, Electronic J. Comb. 12 (2005) R30.
\bibitem{Nie}M. M.  ~Nieto,  (1998) The discovery of squeezed states in 1927
5th Int. Conf. on Squeezed States and Uncertainty
Relations (Balatonfured, 1997) (NASA Conference
Publication NASA/CP-1998-206855) ed D Han, J Janszky,
Y S Kim and V I Man'ko (Greenbelt: Goddard Space
Flight Center) pp 175--80.
\bibitem{Pen} K. A. ~Penson, K. G\'{o}rska, A. ~Horzela, and G.~ Dattoli. Quasi-relativistic heat
equation via L\'{e}vy stable distributions: Exact solutions. Annalen der Physik, 530(3):1700374, 2018. Goddard Space Flight Center) pp 175--80.
\bibitem{Dab}D. ~Babusci, G. ~Dattoli, K. G\`{o}rska, and K.A. Penson. Lacunary generating
functions for the Laguerre polynomials. S\'{e}minaire Lotharingien de Combinatoire, 76:B76b, 2017.

\bibitem{Bouz2}F. Bouzeffour, M. Garayev,
Supersymmetric Quantum mechanics on the radial
lines, arXiv:1811.02151.

\bibitem{Bouz3}F.  Bouzeffour,W. Jedidi and  M. Garayev,  Extended Arithmetic functions. Accepted for publication in The Ramanujan
Journal DOI: 10.1007/s11139-018-0122-8.

\bibitem{KlishevichNQ}  S.~Klishevich and M.~Plyushchay,
``Supersymmetry of parafermions,'' Mod.\ Phys.\ Lett.\ A {\bf 14}, 2739 (1999).

\bibitem{Chihara}T. ~Chihara, An introduction to orthogonal polynomials. Dover Books on
Mathematics. Dover Publications, reprint edition, 2011.
\bibitem{Fo}
  D. ~Foata, Some Hermite polynomial identities and their combinatorics, Adv. in Appl.
Math. 2 (1981), 250--259.
 \end{thebibliography}
\end{document}